\newtheorem{definition}{Definition}[section]
\newtheorem{theorem}{Theorem}[section]
\newtheorem{corollary}{Corollary}[section]
\newtheorem{lemma}[theorem]{Lemma}
\title{\textbf{Optimal bounds on a tree inference algorithm}}
\author{Jack Gardiner, \\
Lachlan L. H. Andrew, Junhao Gan, Jean Honorio and Seeun William Umboh \\
\emph{School of Computing and Information Systems, University of Melbourne}}
\date{}
\begin{document}

\maketitle
\begin{abstract}
    This paper tightens the best known analysis of Hein's 1989 algorithm to infer the topology of a weighted tree based on the lengths of paths between its leaves.
    It shows that the number of length queries required for a degree-$k$ tree of $n$ leaves is $O(n k \log_k n)$, which is the lower bound.
    It also presents a family of trees for which the performance is asymptotically better, and shows that no such family exists for a competing $O(n k \log_k n)$ algorithm.
\end{abstract}
\section{Introduction}
Tree inference is the study of algorithms that infer a hidden tree based on limited information about the tree. It has important applications in computational biology, graphical models, hierarchical clustering and other areas. This paper considers algorithms using the distance between the leaves of a hidden tree in order to infer its topology. This occurs frequently phylogenetics, where the leaves are species and the hidden tree describes the common ancestors between the species.  However, it has many other applications, such as determining latent variables in graphical models~\cite{choi_learning_nodate} or the topology of power networks~\cite{fang2024three,flynn2023improved,pengwah2021topology}.

The process of evaluating the distance between a pair of nodes is called a \emph{query}, and the performance indicator is the number of queries required.
This may be suitable when the edge weights represent the similarity of high-dimensional data represented by the nodes, such as the similarity between images, or if a physical experiment is required to determine a pairwise similarity.

This paper tightens the best-known analysis of a classic algorithm~\cite{hein_optimal_1989} to show that it has optimal query complexity,  $O(n k \log_k n)$ for the class of trees of $n$ leaves with degree bounded by $k$, and lower complexity for sufficiently unbalanced trees.
This bioinformatics algorithm is 35 years old, but has had somewhat of a renaissance recently, with a quarter of its citations appearing in the last three years, predominantly within the computer science literature.

It is also shown that, for a non-trivial class of unbalanced trees, the query complexity is $o(n k \log_k n)$.  Specifically, it is $O(n C(n))$ where $C(\cdot)$ depends on the degree of imbalance trees, and is constant for sufficiently unbalanced trees.
In contrast, the algorithm of~\cite{brodal_complexity_2001}, which is the best known algorithm with $O(n k\log_k n)$ query complexity, is shown to have complexity $\Omega(n k \log_k n)$.

Lower bounds on the query complexity for any tree inference algorithm have been proven. For leaf-distance queries, \cite{hein_optimal_1989} proved an $\Omega(n^2)$ lower bound if it is known that the hidden tree has $n$ leaves. If the degree of all nodes of the hidden tree is less than some $k$, the lower bound is reduced to $\Theta(n k\log_k(n))$ as proven by \cite{king_complexity_nodate}.

\section{Preliminaries}
\subsection{Problem definition}
In the Tree Inference problem, there is a hidden unrooted tree $T^* = (V,E)$ with $n$ leaves and positive real edge weights $w(x,y)$. The leaves are labelled and non-leaves are unlabelled. The distance $d(x,y)$ between $x$ and $y$ is defined to be the sum of the edge weights on the unique path between them in $T^*$. The algorithm is given the leaves and their labels but not the non-leaves nor the edge weights. The algorithm has access to an oracle which takes two nodes $x$ and $y$ and returns their distance $d(x,y)$.
The goal is to recover the tree $T^*$ and the edge weights $w$ with as few queries as possible. 

Throughout this paper, we assume that $T^*$ does not have degree-2 nodes. This is without loss of generality:   if a path $xyz$ with $y$ of degree~2 is replaced by edge $xz$ whose weight is the sum of the weights of $xy$ and $yz$, then the leaf-to-leaf distances are unchanged.

\subsection{Anchor calculations}
\label{sec:Anchors}
An approach used in some algorithms is to use distance queries to calculate where on the path between leaves $x$ and $y$ a leaf $z$ would be anchored from. The \emph{anchor point} of $z$ on the path from $x$ to $y$ is the node on that path that is closest to $z$.
Note that the graph estimate is constructed incrementally, and the anchor point may not yet be a node of the graph estimate at the time $z$ is being added, and may appear to be part way along an edge.

The exact method used to calculate the anchor point depends on the context, but as an example an intuitive method used by \cite{king_complexity_nodate} is to calculate the following:
\[\sigma(x,y,z) = \dfrac{d(x,y) + d(x,z) - d(y,z)}{2}\]\\
which is the distance from $x$ to the anchor point of $z$, visualized in figure \ref{fig:anch_calc}:\\
\begin{figure}[H] 
\begin{center}
\begin{tikzpicture}
    \node[shape=circle,draw=black] (x) at (-3,0) {$x$};
    \node[shape=circle,draw=black] (y) at (3,0) {$y$};
    \node[shape=circle,draw=black] (z) at (0,-1.5) {$z$};
    \node[shape=circle,draw=black] (a) at (0,0) {$a$};
    
    \node [above left = 3mm of y] (yb) {};
    \node [below left = 0.5mm of y] (yt) {};
    \node [below left = 0.5mm of a] (al) {};
    \node [below right = 0.5mm of a] (ar) {};
    \node [above left = 0.5mm of z, ] (zl) {};
    \node [above right= 0.5mm of z] (zr) {};
    \node [above right = 3mm of x] (xt) {};
    \node [below right = 0.5mm of x] (xb) {};
    
    \path [-] (x) edge node[above=0pt] {$\sigma(x,y,z)$} node [below = 9pt, color = black] {$d(x,z)$} (a);
    \path [-] (y) edge node [below = 9pt, color = black] {$d(y,z)$} (a);
    \path [-] (z) edge node[left] {} (a);
    \draw [color = red] (xb) -- (-0.35,-0.35) -- (zl);
    \draw [color = blue] (zr) -- (0.35,-0.35) -- (yt);
    \path [color = green] (yb) edge node[above, color = black] {$d(x,y)$}(xt);
    
\end{tikzpicture}
\end{center}
\caption{The red, green and blue lines represent $d(x,z)$, $d(x,y)$ and $d(y,z)$ respectively. Since $d(y,z)$ is subtracted from the anchor calculation, it will cancel out sections of the red and green lines, leaving us with $d(x,y) + d(x,z) - d(y,z) = 2d(x,a) = 2\sigma(x,y,z)$. Dividing by 2 gives the equation for this anchor calculation.}
\label{fig:anch_calc}
\end{figure}
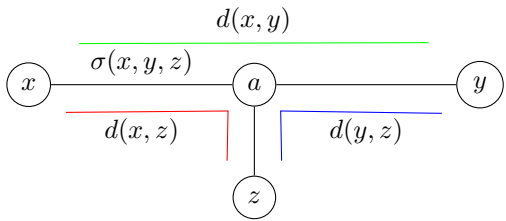

Anchor calculations are used in \cite{king_complexity_nodate} and \cite{hein_optimal_1989}. Anchor calculations are also analogous to triplet queries/experiments used in \cite{brodal_complexity_2001} and \cite{kannan_determining_1996}, as the  between the most closely related pair of leaves in the experiment on 3 leaves $x,y$ and $z$, is the node calculation as the anchor after an anchor calculation.

\subsection{Hein's query algorithm} \label{sec:Hein's algo}
Hein's algorithm~\cite{hein_optimal_1989} is as follows.

At a high level, the algorithm infers the hidden tree by inserting one leaf at a time. It maintains a tree $T$ that initially consists of two leaves $z_1$ and $z_2$ connected by a single edge. It then iterates over the remaining leaves, and for each leaf, inserts it into $T$. After inserting leaves $z_1, \ldots z_n$,  the tree $T$ is exactly the subtree of $T^*$ induced by these leaves.

When inserting a new leaf $z$ into the current tree $T$, anchor calculations are performed between a subset of leaves of $T$ and the new leaf $z$. Each anchor calculation will reduce the number of possible positions $z$ could be attached to. When there is only one position $z$ can be placed, $z$ is added to the tree.

The anchor calculations performed while placing $z$ are chosen so that they minimize the number of queries  required in the worst case. Initially, an edge $uv$ within $T$ is chosen to split $T$ into two rooted subtrees, $T_u$ and $T_v$, rooted at $u$ and $v$ respectively. This edge is referred to as the ``edge-root". A leaf is then chosen from each of the subtrees, leaf $x$ from $T_u$ and leaf $y$ from $T_v$. The anchor calculation $\sigma(x,y,z)$ is then performed, placing $z$ in either $T_v$ or $T_u$, or on the edge joining them.
In the latter case, the new leaf can be placed immediately;
if $T_u$ has lower complexity than $T_v$, then $u$ is replaced by $z$, or vice versa, with ties broken arbitrarily.
To handle the other cases, assume without loss of generality that $z$ is placed in $T_v$.

To place $z$ within $T_v$, anchor calculations are performed between $z$, $x$, and the leaves of $T_v$. Starting at $v$, the aim is to place $z$ in a subtree below $v$. For the $i$'th child node $v_i$ of $v$, the subtree rooted at $v_i$ is denoted by $T_i$. This is illustrated in Figure~\ref{fig:Rooted_tree1}.
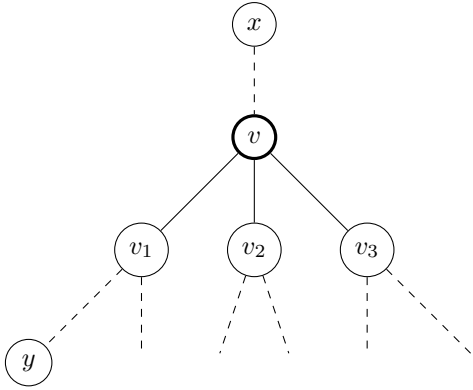
\begin{figure}
\begin{center}
\begin{tikzpicture}
    \node[shape=circle,draw=black] (x) at (0,1.5) {$x$};
    \node[shape=circle,draw=black, very thick] (v) at (0,0) {$v$};
    \node[shape=circle,draw=black] (v1) at (-1.5,-1.5) {$v_1$};
    \node[shape=circle,draw=black] (v2) at (0,-1.5) {$v_2$};
    \node[shape=circle,draw=black] (v3) at (1.5,-1.5) {$v_3$};
    \node[shape=circle,draw=black] (y) at (-3,-3) {$y$};
    \node[] (b1) at (-1.5,-3) {};
    \node[] (b2) at (-0.5,-3) {};
    \node[] (b3) at (0.5,-3) {};
    \node[] (b4) at (1.5,-3) {};
    \node[] (b5) at (3,-3) {};
    
    \path [dashed] (x) edge node[left] {} (v);
    \path [dashed] (v1) edge node[left] {} (b1);
    \path [dashed] (v2) edge node[left] {} (b2);
    \path [dashed] (v2) edge node[left] {} (b3);
    \path [dashed] (v3) edge node[left] {} (b4);
    \path [dashed] (v3) edge node[left] {} (b5);
    \path [dashed] (v1) edge node[left] {} (y);
    \path [-] (v) edge node[left] {} (v1);
    \path [-] (v) edge node[left] {} (v2);
    \path [-] (v) edge node[left] {} (v3);
    
\end{tikzpicture}
\end{center}
\caption{A tree rooted at $v$, with leaf $x$ within $T_u$, and $y$ a leaf in $T_1$}
\label{fig:Rooted_tree1}
\end{figure}
After performing an anchor calculation between $x$, $z$ and a leaf $y$ of $T_1$, there are three possible outcomes. If the anchor point is at $v$, then $z$ must not be placed within $T_1$. If the anchor point is between $v$ and $v_1$, then $z$ is placed on the edge between $v$ and $v_1$. Otherwise, the anchor point must be in $T_1$. No other cases are possible, as it is known that the anchor point is at or below $v$. This process is repeated until a tree $T_i$ is found that $z$ will be placed in, or $a$ is placed on an edge incident to $v$, i.e., for some $i$, we replace the edge $x v_i$ with $x a$ and $a v_i$ and add edge $z a$. If the anchor point is placed at $v$ for each  $T_i$,
then $z$ is inserted as a new child of $v$.

If it is determined that $z$ is to be placed in $T_i$, the process is repeated to place $z$ in a subtree below $v_i$. This is continued until $z$ can be placed directly connected to an edge or vertex, at which point it is inserted into the currently constructed tree.

Once all leaves have been inserted in this manner, the constructed tree will match the hidden tree, and hence the algorithm is complete. \\

Note that each anchor calculation for $z$ after the first requires only one new distance query, $d(y,z)$, as the distance $d(x,y)$ can be inferred from the partial graph without a distance query, and the distance $d(x,z)$ can be re-used.

\subsection{Complexity}
Hein~\cite{hein_optimal_1989} introduced a notion complexity to calculate the maximum number of queries used by Hein's algorithm.
Each node has a complexity, with leaves having complexity 0, and complexities recursively defined moving away from the leaves.
For rooted trees, the complexity is computed recursively toward the root, and complexity of the tree is the complexity of the root, which is the largest complexity of any node.
For unrooted trees, the process is more complicated, and deferred to Section~\ref{sec:unrooted}.

\subsubsection{Complexity of rooted trees}\label{def:Rooted}
The complexity $f(T)$ of a tree $T$ rooted at $r$ is defined
recursively by $f(T) = 0$ if $T$ is a singleton, and otherwise
\begin{equation}\label{eq:complexity_recursion}
    f(T) = \max_{1 \leq i \leq q}{(f(T_i) + i - 1)}
\end{equation}
where $f(T)$ is the complexity of tree $T$, and $T_1,\dots ,T_q$ are the child-trees rooted from the children of $r$ ordered so that $f(T_i)$ is non-increasing.

\subsubsection{Complexity of unrooted trees} \label{sec:unrooted}
For unrooted trees, there is no separation between parents and children, and so the concept of complexity must be modified.
The complexity of an unrooted tree is again calculated recursively from the leaves, towards a centre point which we will call the \emph{edge root}.
The complexity of a node $N$ in an unrooted tree is the complexity of the tree rooted at $N$ formed by removing the edge from $N$ towards the edge root, or the edge root itself if it is adjacent to $N$.
An algorithm for identifying the edge root is given in~\cite{hein_optimal_1989}, in which each iteration identifies tentative complexities for some internal nodes and commits the (equal) smallest of those.

The complexity of an unrooted tree $T$ is defined in terms of functions $f_{uv}$ defined as
\begin{equation}
    f_{uv}(T) = \max \{f(T_u), f(T_v)\} + 2
\end{equation}
where $uv$ is an edge in $T$ and $T_u$ and $T_v$ are the subtrees formed by deleting edge $uv$.
The complexity of $T$ is then
\begin{equation}
    \min_{uv \in E} f_{uv(T)}
\end{equation}
where $E$ is the set of edges of $T$.

\begin{figure}
\begin{center}
\begin{tikzpicture}
    \node[shape=circle,draw=black] (L1) at (0,0) {0};
    \node[shape=circle,draw=black] (L2) at (0,3) {0};
    \node[shape=circle,draw=black] (L3) at (3,0) {0};
    \node[shape=circle,draw=black] (L4) at (1.5,0) {0};
    \node[shape=circle,draw=black] (L5) at (4,-1.5) {0};
    \node[shape=circle,draw=black] (L6) at (5.5,-1.5) {0};
    \node[shape=circle,draw=black] (L7) at (7,-1.5){0};
    \node[shape=circle,draw=black] (L8) at (7,2) {0};
    \node[shape=circle,draw=black] (L9) at (4.5,3.5) {0};
    \node[shape=circle,draw=black] (L10) at (6.5,3.5) {0};
    \node[shape=circle,draw=black] (I1) at (1.5,1.5) {1};
    \node[shape=circle,draw=black, very thick] (I2) at (3,1.5) {2};
    \node[shape=circle,draw=black,very thick] (I3) at (4.25,1.5) {2};
    \node[shape=circle,draw=black] (I4) at (5.5,0) {2};
    \node[shape=circle,draw=black] (I5) at (5.5,2) {1};
    \node[shape=circle,draw=black] (I6) at (5.5,3) {1};
    
    \path [-] (L1) edge node[left] {} (I1);
    \path [-] (L2) edge node[left] {} (I1);
    \path [-] (I1) edge node[left] {} (I2);
    \path [-] (L3) edge node[left] {} (I2);
    \path [-] (L4) edge node[left] {} (I2);
    \path [red] (I2) edge node[left] {} (I3);
    \path [-] (I3) edge node[left] {} (I4);
    \path [-] (L5) edge node[left] {} (I4);
    \path [-] (L6) edge node[left] {} (I4);
    \path [-] (L7) edge node[left] {} (I4);
    \path [-] (I3) edge node[left] {} (I5);
    \path [-] (L8) edge node[left] {} (I5);
    \path [-] (I5) edge node[left] {} (I6);
    \path [-] (L9) edge node[left] {} (I6);
    \path [-] (L10) edge node[left] {} (I6);
\end{tikzpicture}
\end{center} 
\caption{The red edge is the edge chosen to split on, and the nodes in bold are the roots of the rooted trees created by the splits. The complexity of this tree is $\max\{2,2\} + 2 = 4$.
}
\label{fig:Completed_complexity_tree}
\end{figure}
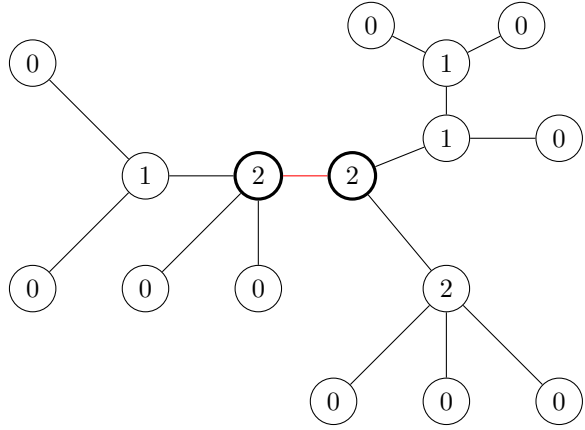

\subsubsection{Asymptotic bounds}
A big-$O$ upper bound on the complexity of a rooted tree with respect to its leaf-count also applies as an upper bound on the unrooted complexity of a tree.
This is because each of the two trees $T_u$ and $T_v$ have less leaf nodes than $T$, and the complexity of $T$ scales linearly with the rooted complexity of $T_u$ and $T_v$. 

In \cite{hein_optimal_1989} the complexity of a rooted tree with $n$ leaves and all leaves with having a degree of at most $k$ was proven to be $O(k\log(n))$. This bound is tightened to $O(k\log_k(n))$ in section \ref{Upper Bound}.
For fixed $k$ this does not change the asymptotic order, but it has an impact if $k$ is allowed to grow.
In particular, if $k = \Theta(n)$ then the complexity reduces from $O(n^2\log(n))$ to the optimal $O(n^2)$.

\section{Upper bound on the query algorithm}\label{Upper Bound}
This section proves that the upper bound for the query complexity of Heins algorithm is $O(nk\log_k(n))$ for trees of bounded degree $k$, whereas Hein proved it to be $O(nk\log(n))$. The bound proven here is tighter, as $n k\log_k(n) = o(n k\log(n))$ when $n \geq k = \omega(1)$. The proof up until Lemma \ref{lem:equal_degree} recapitulates Hein's argumentation in \cite{hein_optimal_1989}.

\subsection{Existing results}
The approach of the proof is to show that an exponential lower bound on how many nodes a tree with a certain complexity can have. This can then be turned into a logarithmic upper bound on the complexity of a tree with $n$ leaves.\\
Let $S(f_0)$ be the minimal number of leaf nodes in a rooted tree of complexity $f_0$, with all nodes having degree less than $k$. When $0 < f_0 < k-1$, $S(f_0)$ is defined as $S(f_0) = f_0+1$, which corresponds to a``star" rooted tree of degree $f_0 + 1 $. 

\begin{theorem}\label{theorem:recursion}
    $S(f_0)$ has the following recursion for any $f_0 \geq k-1$
\[S(f_0) = \min_{1 < i < k}(iS(f_0-i + 1))\]
\end{theorem}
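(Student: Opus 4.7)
The plan is to prove both inequalities in the recursion by strong induction on $f_0$, carrying along the auxiliary claim that $S$ is non-decreasing in $f_0$. The upper direction is a direct construction, while the lower direction requires a short structural argument on a leaf-minimal tree.

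For the upper bound, for each $i$ with $1 < i < k$, I would build a rooted tree with a single root whose $i$ children each carry a leaf-minimal rooted subtree of complexity $f_0 - i + 1$. The root has degree $i \leq k-1$, and by \eqref{eq:complexity_recursion} the overall complexity evaluates to $(f_0 - i + 1) + i - 1 = f_0$. This shows $S(f_0) \leq i \cdot S(f_0 - i + 1)$ for each admissible $i$, giving $S(f_0) \leq \min_{1 < i < k} i \cdot S(f_0 - i + 1)$.

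For the lower bound, let $T$ be a rooted tree of complexity $f_0 \geq k-1$ on $S(f_0)$ leaves, with root-children $T_1, \ldots, T_q$ ordered so that $f(T_j)$ is non-increasing, and let $i^*$ realize the max in \eqref{eq:complexity_recursion}. I would first reduce to the case $q \geq 2$ (if $q = 1$, passing to $T_1$ preserves both the complexity and the leaf count) and then show $i^* \geq 2$ by contradiction: if $i^* = 1$, then $T_1$ alone would be a tree of complexity $f_0$ with strictly fewer leaves, contradicting minimality. A similar argument forces $f(T_j) < f_0$ for every $j$. Because of the non-increasing ordering, each $T_j$ with $j \leq i^*$ has complexity at least $f_0 - i^* + 1$, and by the inductive monotonicity of $S$ at values below $f_0$, contributes at least $S(f_0 - i^* + 1)$ leaves. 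Summing yields $|T| \geq i^* \cdot S(f_0 - i^* + 1)$; since $2 \leq i^* \leq q \leq k-1$, this is bounded below by $\min_{1 < i < k} i \cdot S(f_0 - i + 1)$. Monotonicity of $S$ at $f_0$ then follows from the just-established recursion combined with the inductive monotonicity at smaller arguments.

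The main obstacle I anticipate is cleanly managing the joint induction between the recursion and the monotonicity of $S$: the lower bound invokes monotonicity to replace $S(f(T_j))$ by $S(f_0 - i^* + 1)$ when $f(T_j) > f_0 - i^* + 1$, so monotonicity cannot be deferred as a post-hoc corollary. I would carry both claims as a single inductive invariant, with the boundary case $f_0 = k-1$ (where $S$ transitions from the closed form $f_0 + 1$ into the recursion) handled as an explicit base case.
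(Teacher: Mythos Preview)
Your argument is correct, but it follows a genuinely different route from the paper's. For the lower bound the paper invokes two structural facts about a leaf-minimal tree (stated as Lemmas~\ref{lemma:all_minimal} and~\ref{lemma:equal_complexity}): every child subtree is itself minimal, and all child subtrees have the same complexity. From these it follows at once that a minimal tree of complexity $f_0$ with root degree $q$ has exactly $qS(f_0-q+1)$ leaves, and the recursion drops out. Your argument bypasses these lemmas entirely: you only look at the first $i^*$ children, bound each by $S(f_0-i^*+1)$ via the monotonicity of $S$, and carry that monotonicity as part of a joint strong induction. This is more self-contained (no external structural lemmas) and actually yields monotonicity of $S$ as a by-product, at the cost of a slightly more delicate inductive invariant and the need to handle the boundary $f_0=k-1$ by hand. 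The paper's route, in exchange, gives sharper structural information about minimal trees (they are regular at the root), which it exploits later to pass to filled trees. Both approaches share the same easy construction for the upper bound.
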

This includes the term ``${}+1$'', which was missing from the corresponding 
expression in~\cite{hein_optimal_1989},
which causes $S(f_0)$ to grow more quickly.
The proof follows the logic of~\cite{hein_optimal_1989}, but is repeated here to justify the additional term.
A rooted tree is called \textit{minimal} if it has complexity $f_0$ and $S(f_0)$ leaf nodes. Theorem~\ref{theorem:recursion} is proven utilizing the following results from~\cite{hein_optimal_1989}.
\begin{lemma}\label{lemma:all_minimal}
    For any minimal tree $T$, all child trees of $T$ are minimal.
\end{lemma}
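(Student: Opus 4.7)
The plan is to argue by contradiction: if some child tree $T_j$ of a minimal $T$ were not itself minimal, we could substitute a genuinely smaller tree of the same complexity in its place and obtain a rooted tree of complexity $f_0$ with strictly fewer than $S(f_0)$ leaves, contradicting the definition of $S$.

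Concretely, suppose $T$ has root $r$, child subtrees $T_1,\dots,T_q$ ordered so that $f(T_1)\ge f(T_2)\ge \dots \ge f(T_q)$, and total leaf count $S(f_0)$. Assume for contradiction that there exists an index $j$ with $|\mathrm{leaves}(T_j)| > S(f(T_j))$. By the definition of $S$, we can choose a rooted tree $T_j'$ of degree at most $k$ with $f(T_j') = f(T_j)$ and exactly $S(f(T_j))$ leaves, so $|\mathrm{leaves}(T_j')| < |\mathrm{leaves}(T_j)|$. Form $T'$ by detaching $T_j$ from $r$ and attaching $T_j'$ in its place.

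The key observation is that the complexity recursion \eqref{eq:complexity_recursion} depends only on the multiset $\{f(T_1),\dots,f(T_q)\}$ of child-tree complexities, not on the specific trees realizing those complexities. Since we have replaced $T_j$ with a tree of identical complexity, this multiset is preserved, so after re-sorting we may keep the same ordering and compute $f(T') = \max_{1\le i\le q}(f(T_i) + i - 1) = f(T) = f_0$. Moreover $T'$ still satisfies the degree bound $k$ at $r$ (the degree of $r$ is unchanged) and within $T_j'$ (by choice of $T_j'$). Thus $T'$ is an admissible rooted tree of complexity $f_0$ whose total leaf count is $|\mathrm{leaves}(T)| - |\mathrm{leaves}(T_j)| + |\mathrm{leaves}(T_j')| < S(f_0)$, contradicting the definition of $S(f_0)$.

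The only nontrivial step is justifying that replacing $T_j$ by $T_j'$ preserves $f(T)$; this is immediate from the recursive formula once one notes that $(i-1)$ in the term $f(T_i)+i-1$ depends only on the position of $T_i$ in the sorted order of complexities. I do not expect a serious obstacle beyond spelling this out carefully.
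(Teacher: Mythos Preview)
Your argument is correct: the replacement/exchange step is exactly the right idea, and your justification that \eqref{eq:complexity_recursion} depends only on the multiset of child complexities is the crucial point. Note, however, that the paper does not supply its own proof of this lemma; it simply attributes Lemmas~\ref{lemma:all_minimal} and~\ref{lemma:equal_complexity} to Hein~\cite{hein_optimal_1989} and uses them as black boxes in the proof of Theorem~\ref{theorem:recursion}. Your proof is the standard one and is almost certainly what Hein's original argument amounts to.
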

\begin{lemma}\label{lemma:equal_complexity}
    For any minimal tree $T$, all child trees have the same complexity.
\end{lemma}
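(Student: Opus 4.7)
The plan is to argue by contradiction using a leaf-count comparison against a carefully chosen alternative tree. Let $T$ be a minimal tree with complexity $c$, and let its root have $q \geq 2$ children $T_1, \ldots, T_q$ ordered so that $c_1 \geq \ldots \geq c_q$, where $c_i = f(T_i)$. By Lemma \ref{lemma:all_minimal} each $T_i$ is itself minimal, so $T$ contains exactly $\sum_{i=1}^q S(c_i) = S(c)$ leaves. Suppose for contradiction that the $c_i$ are not all equal, so $c_1 > c_q$.

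The first observation is that $c_1 < c$: otherwise $T_1$ alone already contributes $S(c)$ leaves, leaving none for the remaining $q - 1 \geq 1$ children, which is impossible. Let $i^*$ be the largest index for which $c_{i^*} + i^* - 1 = c$; then $c_{i^*} = c - i^* + 1$, and by the previous observation $i^* \geq 2$. Define an alternative tree $T'$ whose root has $i^*$ children, each a minimal tree of complexity $c - i^* + 1$. Then $f(T') = (c - i^* + 1) + (i^* - 1) = c$ and $T'$ has $i^* \cdot S(c - i^* + 1)$ leaves, and since $i^* \geq 2$, $T'$ is a legitimate tree.

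The leaf count of $T$ satisfies $\sum_{i=1}^{q} S(c_i) \geq \sum_{i=1}^{i^*} S(c_i) \geq i^* \cdot S(c - i^* + 1)$, where the second step uses $c_i \geq c_{i^*}$ for $i \leq i^*$ together with monotonicity of $S$. In the sub-case $i^* < q$, the omitted children contribute at least $q - i^* > 0$ further leaves, yielding a strict inequality. In the sub-case $i^* = q$, the assumption $c_1 > c_q = c - q + 1$ combined with \emph{strict} monotonicity of $S$ gives $S(c_1) > S(c - q + 1)$, again a strict inequality. Either way $T$ has strictly more leaves than $T'$, and hence strictly more than $S(c)$, contradicting the minimality of $T$.

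The main obstacle is justifying strict monotonicity of $S$, which is indispensable in the $i^* = q$ sub-case. I would fold the whole argument into a simultaneous induction on $f_0$ that proves Theorem \ref{theorem:recursion}, Lemma \ref{lemma:equal_complexity}, and the statement $S(f_0 - 1) < S(f_0)$ in parallel. The inductive step for strict monotonicity reduces to the observation that $\min_i h_1(i) > \min_i h_2(i)$ whenever $h_1(i) > h_2(i)$ pointwise, applied with $h_1(i) = i \cdot S(f_0 - i + 1)$ and $h_2(i) = i \cdot S(f_0 - i)$ via the inductive hypothesis on strictly smaller arguments; the base cases $0 \leq f_0 \leq k - 2$ are immediate from $S(f_0) = f_0 + 1$.
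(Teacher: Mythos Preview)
The paper does not actually supply a proof of this lemma; both Lemma~\ref{lemma:all_minimal} and Lemma~\ref{lemma:equal_complexity} are quoted as results from Hein~\cite{hein_optimal_1989}, so there is no in-paper argument to compare against.

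Your argument is sound. The contradiction via the comparison tree $T'$ with $i^*$ equal-complexity children works, and the case split on $i^* < q$ versus $i^* = q$ is the right way to force a strict leaf-count gap. The obstacle you identify---strict monotonicity of $S$---is genuine for this route and cannot be sidestepped in the $i^* = q$ sub-case. Your proposed simultaneous induction resolves it: the pointwise step $\min_i h_2(i) \le h_2(i_1) < h_1(i_1) = \min_i h_1(i)$, where $i_1$ realises $\min h_1$, is exactly what is needed, and the base range $0 \le f_0 \le k-2$ is immediate from $S(f_0) = f_0 + 1$. Two small points worth making explicit in a write-up: first, the assumption $q \ge 2$ is harmless since the lemma is vacuous for $q = 1$; second, the comparison tree $T'$ does respect the degree bound (its root has $i^* \le q$ children and each child is itself a degree-bounded minimal tree), which is what licenses the inference that $T'$ has at least $S(c)$ leaves.
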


\begin{proof}[Proof of Theorem~\protect\ref{theorem:recursion}]
Let $T$ be any minimal tree of complexity $f_0$ and the root node of $T$ have degree~$q$.
By Lemma \ref{lemma:equal_complexity}, all child trees have the same complexity, so $f(T_i) = f(T_q)$ for all possible values of $i$. 
By~\eqref{eq:complexity_recursion},
\[f_0 = \max_{1 \leq i \leq q} {f(T_q) + i - 1} = f(T_q) + q - 1\]
whence $f(T_q) = f_0 - q + 1$.
Since $T$ is minimal, all child trees of $T$ are minimal by Lemma \ref{lemma:all_minimal}. Hence they all have $S(f(T_q))$ leaf nodes, and there are $q$ of them.
The result follows from the minimality of $S(f_0)$.
\end{proof}
From this, \cite{hein_optimal_1989} observed that $S(f_0)$ grows exponentially. As a result the inverse function grows logarithmically, meaning that the number of leaf nodes grows logarithmically with complexity. As shown below, $S(f_0)$ actually grows exponentially with base $k$, leading to a tighter result for large $k$. First, some lemmas about minimal trees must be proven.

The concept of a filled tree and its layer-degree sequence is now introduced. A \emph{filled} tree is a rooted tree where all nodes at a certain unweighted distance from the root have the same degree. The \emph{layer-degree sequence} of a filled tree is a finite sequence $(q_1,q_2,\dots,q_n)$ where $q_i$ is the number of children of each node that is $i-1$ edges away from the root. All of $q_i$ are greater than 1. For an example, the filled tree with layer-degree sequence $(3,2)$ is:\\
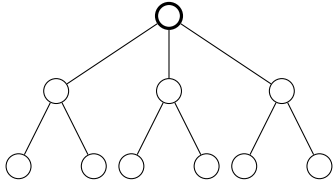
\begin{figure}[H]
\begin{center}
        
    \begin{tikzpicture}
    \node[shape=circle,draw=black, very thick] (I1) at (0,0) {};
    \node[shape=circle,draw=black] (I2) at (-1.5,-1) {};
    \node[shape=circle,draw=black] (I3) at (0,-1) {};
    \node[shape=circle,draw=black] (I4) at (1.5,-1) {};
    \node[shape=circle,draw=black] (I5) at (-2,-2) {};
    \node[shape=circle,draw=black] (I6) at (-1,-2) {};
    \node[shape=circle,draw=black] (I7) at (-0.5,-2) {};
    \node[shape=circle,draw=black] (I8) at (0.5,-2) {};
    \node[shape=circle,draw=black] (I9) at (1,-2) {};
    \node[shape=circle,draw=black] (I10) at (2,-2) {};
    \path [-] (I1) edge node[left] {} (I2);
    \path [-] (I1) edge node[left] {} (I3);
    \path [-] (I1) edge node[left] {} (I4);
    \path [-] (I2) edge node[left] {} (I5);
    \path [-] (I2) edge node[left] {} (I6);
    \path [-] (I3) edge node[left] {} (I7);
    \path [-] (I3) edge node[left] {} (I8);
    \path [-] (I4) edge node[left] {} (I9);
    \path [-] (I4) edge node[left] {} (I10);
\end{tikzpicture}

\end{center}

\caption{A \textit{filled} tree with a layer-degree sequence of $(3,2)$}
\label{fig:filled_tree}
\end{figure}
\noindent Note that the ``degree" in a layer-degree sequence is actually the amount of children a node has. This is for simplicity and brevity of the equations used.

\subsection{Tighter results with $k$-ary trees}
The following lemma can be proven via induction on $f_0$.
\begin{lemma}\label{lem:equal_degree}
    For any complexity $f_0 > 0$, there exists a minimal tree that is also a \textit{filled} tree.
\end{lemma}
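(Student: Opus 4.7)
The plan is to prove the lemma by strong induction on the complexity $f_0$, building the required filled minimal tree explicitly at each step.

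For the base case $0 < f_0 < k-1$, the definition gives $S(f_0) = f_0 + 1$, and this is realized by the star whose root has degree $f_0 + 1$ and whose children are all leaves. Such a star is trivially filled, with layer-degree sequence $(f_0 + 1)$, so the base case is immediate.

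For the inductive step, assume $f_0 \geq k-1$ and the lemma holds for all positive complexities strictly below $f_0$. By Theorem~\ref{theorem:recursion}, pick $i^{*}$ with $1 < i^{*} < k$ attaining the minimum, so that $S(f_0) = i^{*} \cdot S(f_0 - i^{*} + 1)$. Since $i^{*} \leq k-1 \leq f_0$ the argument $f_0 - i^{*} + 1$ is positive, and since $i^{*} \geq 2$ it is strictly smaller than $f_0$, so the induction hypothesis supplies a filled minimal tree $T'$ of complexity $f_0 - i^{*} + 1$ with some layer-degree sequence $(q_1, \dots, q_m)$. Construct $T$ by taking a new root with $i^{*}$ children, and attaching an identical copy of $T'$ below each child. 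Then $T$ has $i^{*} \cdot S(f_0 - i^{*} + 1) = S(f_0)$ leaves; by \eqref{eq:complexity_recursion} applied to the root, whose children all have common complexity $f_0 - i^{*} + 1$, the complexity of $T$ is $(f_0 - i^{*} + 1) + i^{*} - 1 = f_0$; and because all $i^{*}$ subtrees hanging off the root are isomorphic, every node at a given depth of $T$ has the same degree, so $T$ is filled with layer-degree sequence $(i^{*}, q_1, \dots, q_m)$.

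The only subtlety, and the main thing to be careful about, is that Lemma~\ref{lemma:equal_complexity} guarantees merely equal complexity of the children of a minimal tree, not equal structure. A naive inductive argument that chose ``some'' minimal child tree independently for each of the $i^{*}$ children could yield children with different layer-degree sequences, and the resulting $T$ would fail to be filled. The construction above dodges this issue by freezing one filled witness $T'$ from the induction hypothesis and reusing that same copy for every child of the new root, which is precisely what is needed to preserve the filled property up the induction.
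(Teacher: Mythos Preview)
Your proof is correct and follows essentially the same inductive idea as the paper's one-line sketch (``Any minimal tree becomes filled by replacing its child-trees with filled minimal trees, whose existence is guaranteed by induction''). The only cosmetic difference is that you build the tree from scratch via the minimizer $i^{*}$ of Theorem~\ref{theorem:recursion}, whereas the paper phrases it as starting from an arbitrary minimal tree and invoking Lemmas~\ref{lemma:all_minimal} and~\ref{lemma:equal_complexity} before replacing the children; your explicit remark that all $i^{*}$ children must be the \emph{same} filled witness is exactly the detail the paper's sketch leaves implicit.
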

Any minimal tree becomes filled by replacing its child-trees with filled minimal trees, whose existence is guaranteed by induction.
The following can also be proved by induction.
\begin{lemma}\label{lemma:filled_complexity}
    Given a filled rooted tree $T$ with a layer-degree sequence of $(q_1,q_2,q_3,\cdots,q_j)$: 
    \begin{itemize}
        \item the complexity of $T$ is $\sum_{i = 1}^j (q_i - 1)$
        \item the number of leaf nodes in $T$ is $\prod_{i = 1}^j q_i$
    \end{itemize}
\end{lemma}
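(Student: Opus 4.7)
The plan is to prove both bullet points simultaneously by induction on the length $j$ of the layer-degree sequence, using Lemma~\ref{lemma:equal_complexity} and the recursion \eqref{eq:complexity_recursion} in exactly the way it was used in the proof of Theorem~\ref{theorem:recursion}.

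For the base case, I take $j=0$: the tree is a single node (a leaf), whose complexity is $0$ and whose leaf count is $1$, matching the empty-sum and empty-product conventions. (Alternatively one may take $j=1$ as the base: the root has $q_1$ leaf children, giving leaf count $q_1 = \prod_{i=1}^1 q_i$, and by \eqref{eq:complexity_recursion} complexity $\max_{1 \le i \le q_1}(0 + i - 1) = q_1 - 1 = \sum_{i=1}^1(q_i-1)$.)

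For the inductive step, suppose the claim holds for every filled tree whose layer-degree sequence has length less than $j$, and let $T$ be filled with sequence $(q_1, q_2, \ldots, q_j)$. Every child of the root is the root of a filled subtree $T'$ with layer-degree sequence $(q_2, \ldots, q_j)$; there are $q_1$ such subtrees, all isomorphic. By the inductive hypothesis each $T'$ has $\prod_{i=2}^j q_i$ leaves and complexity $\sum_{i=2}^j (q_i - 1)$. The total leaf count of $T$ is therefore $q_1 \prod_{i=2}^j q_i = \prod_{i=1}^j q_i$, which settles the second bullet.

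For the first bullet, since all $q_1$ child-trees of the root have the same complexity $c := \sum_{i=2}^j (q_i - 1)$, any ordering satisfies the non-increasing requirement of \eqref{eq:complexity_recursion}, and the recursion gives
\begin{equation*}
    f(T) \;=\; \max_{1 \le \ell \le q_1} \bigl(c + \ell - 1\bigr) \;=\; c + q_1 - 1 \;=\; (q_1 - 1) + \sum_{i=2}^j (q_i - 1) \;=\; \sum_{i=1}^j (q_i - 1),
\end{equation*}
completing the induction. There is no real obstacle here; the only subtlety is that the complexity recursion requires an ordering of child-trees by non-increasing complexity, and this is automatic in a filled tree because all child-trees are isomorphic, so the maximum in \eqref{eq:complexity_recursion} is always attained at $\ell = q_1$.
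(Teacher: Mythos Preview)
Your proof is correct and follows exactly the approach the paper intends: the paper simply states that Lemma~\ref{lemma:filled_complexity} ``can also be proved by induction,'' and your induction on the length $j$ of the layer-degree sequence is the natural way to carry this out. One small remark: you announce in your plan that you will use Lemma~\ref{lemma:equal_complexity}, but that lemma concerns \emph{minimal} trees, not filled trees, and you (rightly) never invoke it---the equal complexity of the child subtrees follows directly from the definition of a filled tree, as you yourself note at the end.
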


\noindent Neither of these equations relies on the ordering of the layer-degree sequence, allowing the layers to be re-arranged without changing how many leaves are in the tree or its complexity. So given a filled minimal tree, the layer-degrees can be re-arranged and the resulting tree will still be minimal. From this point onwards, all layer-degree sequences are assumed to be in non-increasing order, retaining their minimality if they are minimal.\\

The main result of this section is the following.
\begin{theorem} \label{theorem:upper_bound}
    If a rooted tree $T$ has $n$ leaves and the degree of its nodes are bounded above by $k$, then its complexity $f(T)$ satisfies the following:
    \[f(T) = O(n k\log_k(n)).\]
\end{theorem}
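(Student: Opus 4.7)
The plan is to derive the bound from a lower bound on $S(f_0)$: since $T$ has $n$ leaves and complexity $f(T)$, the definition of $S$ gives $n \geq S(f(T))$, so any inequality of the form $S(f_0) \geq c^{f_0/(k-1)}$ inverts to $f(T) \leq O(k \log_k n)$, which is stronger than (and thus implies) the stated $O(n k \log_k n)$ bound. By Lemma \ref{lem:equal_degree} I may restrict attention to filled minimal trees when computing $S(f_0)$, and Lemma \ref{lemma:filled_complexity} then reduces the question to a clean relation between the complexity and the leaf count in terms of the layer-degree sequence.

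Concretely, fix a filled minimal tree of complexity $f_0$ with layer-degree sequence $(q_1, \ldots, q_j)$; each $q_i$ lies in $\{2, \ldots, k-1\}$ because every internal node has total degree at most $k$ and therefore at most $k-1$ children. Lemma \ref{lemma:filled_complexity} gives $f_0 = \sum_{i=1}^j (q_i - 1)$ and $S(f_0) = \prod_{i=1}^j q_i$. The key pointwise inequality I would establish is
\[q_i - 1 \;\leq\; \frac{k-1}{\log k}\,\log q_i \qquad \text{for all } q_i \in [2,k],\]
which follows by checking that $g(q) = (q-1)/\log q$ is monotonically increasing on $[2, \infty)$; the derivative condition reduces to $\log q \geq 1 - 1/q$, which holds already at $q = 2$ and then propagates.

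Summing the pointwise inequality across the $j$ layers yields
\[f_0 \;=\; \sum_{i=1}^j (q_i - 1) \;\leq\; \frac{k-1}{\log k} \sum_{i=1}^j \log q_i \;=\; \frac{k-1}{\log k}\,\log S(f_0) \;=\; (k-1)\log_k S(f_0),\]
and combining with $S(f(T)) \leq n$ yields $f(T) \leq (k-1)\log_k n = O(k \log_k n)$, which is $O(nk \log_k n)$ a fortiori. The main obstacle is verifying the monotonicity of $g(q)$ above; everything else is bookkeeping, with minor care needed to absorb the distinction between root and non-root degree bounds (the root may have $k$ children rather than $k-1$) into the big-$O$ constants.
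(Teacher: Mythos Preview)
Your proposal is correct and takes a genuinely different route from the paper. The paper argues structurally: working with a filled minimal tree whose layer-degree sequence $(q_1,\ldots,q_j)$ is sorted non-increasingly, it shows via a swap argument that if $q_{j-1}<k-1$ one can replace the last two layers by $(q_{j-1}+1,q_j-1)$ to strictly decrease the leaf count while preserving the complexity, contradicting minimality; hence the sequence must be $(k-1,\ldots,k-1,q_j)$, from which $S(f_0)>(k-1)^{f_0/(k-2)-1}$ and the bound follow. You instead bypass the structural characterisation entirely with the analytic pointwise inequality $q-1\le\frac{k-1}{\log k}\log q$ on $[2,k]$, proved by monotonicity of $(q-1)/\log q$, and then sum across layers. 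Your argument is shorter and does not need the tree to be minimal in any essential way beyond invoking $n\ge S(f(T))$ and the existence of \emph{some} filled tree realising $S(f_0)$; the paper's argument, on the other hand, yields the extra information that minimal filled trees are essentially complete $(k{-}1)$-ary, and delivers a marginally sharper explicit constant $(k-2)\log_{k-1}n+k-2$ versus your $(k-1)\log_k n$. Both are $O(k\log_k n)$, and both of you correctly note that this is stronger than the $O(nk\log_k n)$ appearing in the theorem statement.
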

\begin{proof}
Consider a filled minimal rooted tree $T$ with complexity $f_0$ and a layer-degree sequence of $S = (q_1,q_2,q_3,\cdots,q_j)$ in non-increasing order. If the degree bound is $k$, it will be shown that $q_{j-1} = k-1$ by contradiction.

If $q_{j-1} \neq k-1$, then $q_{j-1} < k-1$ because otherwise the degree exceeds the bound. Now, consider the filled tree $T'$ with layer-degree sequence $S' = (q_1,q_2,q_3,\cdots,q_{j-1} + 1, q_jg-1)$. This tree still has the required degree bound as $q_{j-1} < k-1$ and therefore $q_{j-1} + 1 < k$. Furthermore, complexity and leaf count can be calculated as:
\begin{itemize}
    \item The complexity of $T' = $  $\sum_{i = 1}^{j-2} (q_i - 1) + (q_{j-1} + 1 - 1) + q_j - 1 - 1= \sum_{i = 1}^j (q_i - 1)$
    \item The leaf count of $T' = $  $(q_{j-1} + 1)(q_{j} - 1)\prod_{i = 1}^{j-2} q_i = (q_{j-1}q_{j} + q_{j} - q_{j-1} - 1)\prod_{i = 1}^{j-2} q_i$
\end{itemize}
From the non-increasing property of $S$ it follows that $q_j \leq q_{j-1}$. 
Hence the leaf count of $T'$ is
\begin{align*}
    (q_{j-1}q_{j} + q_{j} - q_{j-1} - 1)\prod_{i = 1}^{j-2} q_i
    &< \prod_{i = 1}^{j} q_i
\end{align*}
which is the leaf count of $T$, yet its complexity remains the same. This contradicts the leaf-count minimality of $T$. Therefore $q_{j-1} = k-1$. Furthermore, since $S$ is non-increasing, all degrees above $q_{j-1}$ are at least $q_{j-1}$, therefore
\begin{equation}\label{eq:S}
S = (k-1,k-1,k-1,\cdots, k-1,  q_j).
\end{equation}

Now, $f_0$ can be calculated using Lemma~\ref{lemma:filled_complexity}. Since $f(T)$ is the complexity of $T$,
\[f_0 = f(T) = \sum_{i = 1}^{j-1}(k-2) + q_j - 1 = (j-1)(k-2) + q_j-1\]
whence, since $q_j \leq k-1$,
\[n = \dfrac{f_0-q_j}{k-2} + 1
\geq \dfrac{f_0}{k-2}
\]
By~\eqref{eq:S}, the leaf count $S(f_0)$ of $T$ then satisfies
\begin{equation}\label{eq:S(f_0)}
  S(f_0) = q_j\prod_{i = 1}^{j-1} (k-1) > (k-1)^{{f_0}/(k-2) - 1}
\end{equation}
This shows that $S(f_0)$'s growth is at least exponential, as Hein observed. This formula can now be re-arranged to get an upper bound on the complexity of a tree with $n$ leaves.\\
Given a tree with $n$ leaves, allow it to have complexity $f_0$. Therefore $n \geq S(f_0)$, otherwise $S(f_0)$ is not minimal.
Taking log of~\eqref{eq:S(f_0)} then gives
\begin{align*}
    \log_{k-1}(n) &> \frac{f_0}{k-2} - 1
\end{align*}
whence
\begin{equation}
    f_0 \leq (k-2)\log_{k-1}(n) + k - 2 = O(k\log_k(n))
\end{equation}
Hence any tree $T$ with $n$ leaves and a degree bound of $k$ has $f(T) = f_0 = O(k\log_k(n))$ 
\end{proof}

Now the following theorem can be proven, using Theorem~\ref{theorem:upper_bound}
\begin{theorem}
    The query complexity of Hein's algorithm is $O(nk\log_k(n))$.
\end{theorem}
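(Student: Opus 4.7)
The plan is to relate the query cost of a single leaf insertion to the complexity of the current estimated tree, and then sum per-insertion costs over the $n$ leaves.

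First I would make explicit what is implicit in the definitions of Section~\ref{sec:Hein's algo}: that $f(T)$, taken in the unrooted sense as $\min_{uv} f_{uv}(T)$, upper-bounds the number of distance queries spent inserting one new leaf $z$ into $T$. When inserting $z$, the algorithm picks the edge-root $uv$ achieving this minimum, spending two initial queries $d(x,z)$ and $d(y,z)$ to determine whether $z$ lies in $T_u$, in $T_v$, or on the edge $uv$ (explaining the $+2$ in $f_{uv}$). Within a rooted subtree the algorithm scans the children of the current root in non-increasing order of complexity; by the reuse observation at the end of Section~\ref{sec:Hein's algo}, each probe after the first costs only one new query, so if $z$ ends up in the $i$-th child-subtree $T_i$ the total rooted cost is at most $f(T_i) + (i-1)$, matching recursion~\eqref{eq:complexity_recursion}. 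Taking the maximum over $i$ then shows that one leaf insertion costs at most $f(T) + O(1)$ queries.

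Next I would uniformly bound $f(T)$. Using $f_{uv}(T) = \max\{f(T_u), f(T_v)\} + 2$ together with the ``Asymptotic bounds'' observation, the unrooted complexity of any intermediate tree $T$ is at most $2$ plus the rooted complexity of one of its subtrees, which has strictly fewer than $n$ leaves and degree still bounded by $k$. Theorem~\ref{theorem:upper_bound} then yields $f(T) = O(k \log_k n)$ at every iteration.

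Finally, summing per-insertion costs over the $n-2$ insertions (the first two leaves form the initial edge at no query cost) gives
\[ \sum_{i=3}^{n} \bigl( f(T_{i-1}) + O(1) \bigr) \;=\; (n-2) \cdot O(k \log_k n) \;=\; O(n k \log_k n), \]
which is the claimed bound. The main obstacle is the first step, since it requires confirming that the abstract complexity function genuinely tracks the per-insertion query count: one must verify that Hein's heuristic of scanning children in non-increasing order of complexity realizes the maximum in \eqref{eq:complexity_recursion}, that the $+2$ in $f_{uv}$ accounts exactly for the two queries at the edge-root, and that the reuse of $d(x,z)$ between successive probes keeps every subsequent anchor calculation at precisely one new query. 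Once that correspondence is established, the rest is just the per-step bound from Theorem~\ref{theorem:upper_bound} summed over all insertions.
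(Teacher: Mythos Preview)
Your proposal is correct and follows essentially the same route as the paper: bound each insertion by the unrooted complexity $f(T)$, invoke Theorem~\ref{theorem:upper_bound} to get $f(T)=O(k\log_k n)$ for every intermediate tree, and sum over the $n$ insertions. The only difference is that you unpack in detail why $f(T)$ really bounds the per-insertion query count (the recursion~\eqref{eq:complexity_recursion}, the $+2$, and the reuse of $d(x,z)$), whereas the paper simply recalls this as a known fact from Hein's original work.
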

\begin{proof}
By Theorem~\ref{theorem:upper_bound}, the complexity of any rooted tree with $n$ leaves and a degree bound of $k$ is $O(k\log_k(n))$. The complexity of an unrooted tree is defined as the maximum of the complexities of two rooted trees joined by a chosen ``edge-root" plus 2. Hence, the complexity of an unrooted tree also has a similar bound of $f_0 \leq (k-1)\log_{k-1}(n) + 2$ and therefore $f_0 = O(k\log_k(n))$ for unrooted trees as well.\\
Recall that complexity is the maximum number of queries required to insert a new node into a tree $T$ in Hein's algorithm. If the tree has $n$ leaf nodes, it will require no more than $n$ insertions, and each insertion takes $f(T)$ queries, which has been shown to be bounded above by $O(k\log_k(n))$.
\end{proof}

This matches the lower bound for arbitrary trees with degree bound $k$.
Since \cite{hein_optimal_1989} predates \cite{brodal_complexity_2001}, it appears to be the first algorithm to achieve that bound.

\subsection{Tighter results for unbalanced binary trees}
When the tree is constrained to be sufficiently unbalanced, tighter results are possible.
For ease of exposition, these are presented for binary trees.

The following class of trees has imbalance at each node such that the size of the subtree rooted at one child is smaller than that of the other child, but an amount controlled by a function $g$.
\begin{definition}[$g$-beanstalk]
Let $g:\mathbb{N} \rightarrow [1,\infty)$ be non-decreasing with $g(n) \le n$. An \emph{$g$-beanstalk} is a rooted binary tree such that $N_S(n) \le g(N_L(n))$ for all nodes $n$ where $N_L(n)$ and $N_S(n)$ are respectively the numbers of leaves in the larger and smaller subtrees of the children of $n$.
\end{definition}

\begin{theorem}\label{thrm:beanstalk}
Let $\hat g^{-1}:\mathbb{N} \rightarrow \mathbb{N} \cup \{\infty\}$ with $\hat g^{-1}(n) = \inf\{x \in \mathbb{N} \mid g(x) \geq n\}$, let
\begin{equation}
    h(n) =  \hat g^{-1}(n) + n
\end{equation}
and let $h^{(i)}$ be the $i$-fold composition of $h$, so that $h^{(2)}(n) = h(h(n))$.

The  complexity $C(n)$ of an $g$-beanstalk with $n$ leaves satisfies
\begin{equation}\label{eq:beanstalk_complexity}
C(n) < 
\min\{C \in \mathbb{N} \cup \{0\} \mid n < h^{(C)}(1)\}
\end{equation}
\end{theorem}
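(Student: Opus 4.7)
The plan is to prove by induction on $C$ the statement: \emph{any $g$-beanstalk with $n$ leaves and complexity $C$ satisfies $n \ge h^{(C)}(1)$}. The theorem then follows by contradiction. Writing $C^* = \min\{C \in \mathbb{N} \cup \{0\} \mid n < h^{(C)}(1)\}$, if $C(n) \ge C^*$, then, using that $h(m) \ge m$ (so the sequence $h^{(i)}(1)$ is non-decreasing), we would obtain $n \ge h^{(C(n))}(1) \ge h^{(C^*)}(1)$, contradicting the definition of $C^*$.

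For the base case $C = 0$, every $g$-beanstalk trivially has $n \ge 1 = h^{(0)}(1)$. For the inductive step, let $T$ be a $g$-beanstalk of complexity $C \ge 1$ whose root has child subtrees $T_L, T_S$ with $N_L \ge N_S$ leaves and respective complexities $f_L, f_S$. Specializing~\eqref{eq:complexity_recursion} to a binary root gives $C = \max(f_L, f_S)$ when $f_L \ne f_S$, and $C = f_L + 1$ when $f_L = f_S$. Since subtrees of a $g$-beanstalk are themselves $g$-beanstalks, the induction hypothesis yields $N_L \ge h^{(f_L)}(1)$ and $N_S \ge h^{(f_S)}(1)$. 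The beanstalk constraint $N_S \le g(N_L)$ combined with the definition of $\hat g^{-1}$ gives $N_L \ge \hat g^{-1}(N_S)$, which by monotonicity of $g$ lifts to $N_L \ge \hat g^{-1}(h^{(f_S)}(1))$.

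I would then split into three cases. If $f_L \ge f_S + 1$, then $C = f_L$ and $n \ge N_L \ge h^{(C)}(1)$ already. If $f_L = f_S$, then $f_S = C - 1$, and summing the two bounds gives
\[ n = N_L + N_S \ge \hat g^{-1}(h^{(C-1)}(1)) + h^{(C-1)}(1) = h(h^{(C-1)}(1)) = h^{(C)}(1). \]
If $f_L < f_S$, then $C = f_S$, and the same summation yields the (stronger than required) bound $n \ge h^{(C+1)}(1) \ge h^{(C)}(1)$, closing the induction.

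The main obstacle is that the ordering of the children by leaf count (which governs the beanstalk constraint) need not match their ordering by complexity (which governs~\eqref{eq:complexity_recursion}): a subtree with fewer leaves can be more balanced, and hence of higher complexity, than its sibling. The three-way split on $(f_L, f_S)$ is forced by this mismatch, and the monotonicity of $\hat g^{-1}$, inherited from that of $g$, is what allows the beanstalk constraint to be transferred through one application of $h$ uniformly in each case.
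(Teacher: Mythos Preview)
Your proof is correct and follows the same core strategy as the paper: establish by induction that any $g$-beanstalk of complexity $C$ has at least $h^{(C)}(1)$ leaves, then conclude the complexity bound by contraposition using the monotonicity of $h^{(i)}(1)$.

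The execution differs slightly. The paper argues via the extremal function $N(c)$ (minimum leaves over all $g$-beanstalks of complexity $c$), picks a minimal tree whose root has no child of complexity $c$, observes both children then have complexity $c-1$, and derives the recurrence $N(c)\ge h(N(c-1))$; it then builds an explicit witness to upgrade this to the equality $N(c)=h(N(c-1))$. You instead induct directly over all $g$-beanstalks and carry out an exhaustive three-way split on the pair $(f_L,f_S)$, which cleanly handles the mismatch between leaf-count ordering and complexity ordering that the paper sidesteps by restricting to the minimal case. Your route avoids two steps the paper leaves implicit (existence of a minimal tree with neither child of full complexity, and monotonicity of $\hat g^{-1}$) and drops the equality construction, which is not needed for the inequality~\eqref{eq:beanstalk_complexity}.
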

\begin{proof}
    Let $N(c)$ be the minimum number of leaf nodes required in a $g$-beanstalk of  complexity $c$.
    Let $T$ be a rooted binary tree of complexity $c$ with $N(c)$leaves and with root $r$, such that neither child subtree has complexity $c$.
    Then $T$ must have two child subtrees both of complexity $c-1$.
    Hence $g(N_L(r)) \ge N_S(r) \ge N(c-1)$,
    whence $N_L(r) \ge \hat g^{-1} (N(c-1))$.
    $T$ has $N_L(r) + N_S(r)$ leaf nodes.
    $T$ was taken to be an $g$-beanstalk of complexity $c$, hence $N(c)$ is subject to
    \begin{equation}\label{eq:N(c)>h(N(c-1))}
    N(c) \geq N_L(r) + N_S(r) \geq \hat g^{-1}(N(c-1)) + N(c-1) = h(N(c-1))
    \end{equation}
    Next, let $T'$ be an $g$-beanstalk of complexity $c$ whose smaller child subtree has complexity $c-1$ and $N(c-1)$ leaves, and whose larger child subtree has complexity $c-1$ and  $\hat g^{-1}(N(c-1))$ leaf nodes.
    This is possible, for example, by making the larger subtree a copy of the smaller subtree with one leaf replaced by a caterpillar of sufficient size.
    Since $T'$ has complexity $c$, it provides an upper bound on $N(c)$ that matches~\eqref{eq:N(c)>h(N(c-1))}, giving the equality\[N(c) = h(N(c-1)) \]
    For any $g$, the singleton tree will have one leaf and a complexity of $0$, and furthermore the singleton tree is binary.
    Hence, $N(0) = 1$ and by induction,
    \[N(c) = h^{(c)}(1)\]
    Consider an $g$-beanstalk with $n$ leaves and some complexity $f_0$.
    Since $N(\cdot)$ is increasing, if $n < N(c)$ for some $c$, then  for all $x \ge c$, it follows that $n < N(c) \le N(x)$ and, by the minimality of $N(x)$, $f_0 \neq x$. Therefore $f_0 < c$.
    Hence
    \[f_0 < \min\{c \in \mathbb{N} | \; n < N(c)\}
    = \min\{c \in \mathbb{N} | \; n < h^{(c)}(1)\}
    \]
    and the result follows by the definition of $C(n)$.
\end{proof}

This admits some simple special cases.
\begin{corollary}
With the notation of Theorem~\ref{thrm:beanstalk}, the following hold.
\begin{enumerate}
\item If $g(n) =  \gamma n$ for some $\gamma \in (0,1]$, then
\[
  C(n) \leq \left\lceil \log_{1+{1}/{\gamma}}(n)\right\rceil.
\]

\item If $n\ge 2$ then a simpler bound is
\[C(n) \leq \min\{C \in \mathbb{N} \mid g^{(C)}(n) < 2\}. \]
Let $\Delta(n) = g(n)-g(n-1)$.  If $\Delta(n)$ is positive decreasing and $\Delta(n) \le 1/\sqrt{n}$, then this relaxation is without loss in the sense that
\begin{equation}\label{eq:tight}
\lim_{n\rightarrow \infty}
    \frac{\min\{C \in \mathbb{N} \mid n < h^{(c)}(1)\}}
    {\min\{C \in \mathbb{N} \mid g^{(C)}(n) < 2\}} = 1.
\end{equation}

\item If $g(n) = n^\gamma$ for some $\gamma \in (0,1)$, then for $n\ge 2$
\[C(n) \leq \lfloor\log_{1/\gamma} \log_2 n \rfloor + 1.
\]

\item If $g(n) = a$, with $a \in \mathbb{N}$, then
\[C(n) \leq \left\lceil \log_{2}(a) +  1\right\rceil.\]
\end{enumerate}
In each case, Hein's algorithm takes $O(n C(n))$ queries to identify the unrooted relaxation of the $g$-beanstalk.
\end{corollary}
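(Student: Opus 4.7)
The plan is to dispatch each of the four cases by specializing Theorem~\ref{thrm:beanstalk} to the given $g$ and bounding $h^{(c)}(1)$ (or the surrogate in part~(2)) explicitly. Part~(2) first yields a reusable upper-bound lemma, which I would then feed into part~(3) together with a short direct argument for part~(4).

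For part~(1), $g(n) = \gamma n$ gives $\hat g^{-1}(n) = \lceil n/\gamma\rceil \ge n/\gamma$, so $h(n) \ge (1+1/\gamma)\,n$ and $h^{(c)}(1) \ge (1+1/\gamma)^c$. Thus any $c \ge \lceil \log_{1+1/\gamma} n\rceil$ satisfies $n < h^{(c)}(1)$, which is what Theorem~\ref{thrm:beanstalk} requires.

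For the upper bound in part~(2), I would prove by induction on $c$ that $g^{(c)}(n) < 2$ implies $n < h^{(c+1)}(1)$; the base case is $n = 1 < 2 \le N(1)$, and for the step, $g^{(c)}(n) < 2$ means $g^{(c-1)}(g(n)) < 2$, so by hypothesis $g(n) < N(c)$, which by monotonicity of $g$ is equivalent to $n < \hat g^{-1}(N(c))$ and hence $n < N(c) + \hat g^{-1}(N(c)) = N(c+1)$. For the tightness claim, the key identity is that if $h(m) = n$ then $g(n-m) \ge m$, so $h^{-1}(n) \le g(n)$; the matching lower bound uses the increment hypothesis $\Delta(n) \le 1/\sqrt n$: from $g(n-m-1) < m$ one deduces $g(n) - m \le (m+1)\Delta(n-m)$, which is $o(\sqrt n)$ when $m = o(n)$. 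Chaining this comparison across $C$ iterations shows $h^{-C}(n)$ and $g^{(C)}(n)$ differ by an amount negligible compared to the bound itself, yielding the ratio~$\to 1$. I expect this uniform-across-iterations control of the approximation error to be the main obstacle.

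Parts~(3) and~(4) are then routine, except that part~(4) needs a direct argument because $g \equiv a$ makes $g^{(C)}(n) = a$ for all $C \ge 1$, which never drops below~$2$. For part~(3), $g^{(C)}(n) = n^{\gamma^C}$ and $n^{\gamma^C} < 2$ reduces to $C > \log_{1/\gamma} \log_2 n$, so $\lfloor \log_{1/\gamma} \log_2 n\rfloor + 1$ suffices. For part~(4), I would argue by induction on the number of leaves: the smaller subtree has $\le a$ leaves and hence complexity $\le \log_2 a$ (any binary tree of complexity $c$ has at least $2^c$ leaves), the larger subtree is itself a $g$-beanstalk with strictly fewer leaves, and the binary complexity recursion $f(T) = \max(f(T_L), f(T_S)+1)$ gives $C(T) \le \lceil \log_2 a\rceil + 1$. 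Finally, the closing $O(nC(n))$ claim follows because Hein's algorithm performs $n$ insertions and, by Section~\ref{sec:unrooted}, the unrooted complexity exceeds the rooted complexity of either side by an additive~$2$, so each insertion costs $O(C(n))$ queries.
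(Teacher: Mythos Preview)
Your plan matches the paper's proof in every part: (1), (3), and the upper bound in (2) are argued identically (the paper proves the contrapositive $g^{(c-1)}(N(c))\ge N(1)=2$ via the same step $g(N(c))\ge N(c-1)$ that drives your induction), and your inductive treatment of (4) is the same observation the paper makes structurally by recognising the tree as a caterpillar with side-branches of at most $a$ leaves. The only framing difference is in the tightness claim of (2): you compare $g$ to $h^{-1}$ pointwise and propagate the error, whereas the paper sidesteps defining $h^{-1}$ (delicate since $h\colon\mathbb{N}\to\mathbb{N}$ is not surjective) by restricting to the sequence $n_i=h^{(i)}(1)$ and showing both counts grow as $i+O(1)$; your key estimate $g(n)-h^{-1}(n)\le (m+1)\Delta(n-m)$ is exactly the paper's inequality $g(h(m))<m+1$, and in fact your own bound gives $O(1)$ rather than the $o(\sqrt n)$ you wrote, which is precisely what the paper uses.
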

\begin{proof}
\begin{enumerate}
\item
If $N(c)$ is the minimum number of leaves for a tree restricted to the conditions of Theorem~\ref{thrm:beanstalk},with complexity $c$, the recursion becomes
\[N(c) = \left\lceil \dfrac{1}{\gamma}N(c-1)\right\rceil + N(c-1)\]
which gives $N(c) \geq (1+1/\gamma)^c$ since $N(0)=1$. %
\item

By~\eqref{eq:N(c)>h(N(c-1))}, $N(c) \ge \hat g^{-1}(N(c-1))$ and so $g(N(c)) \ge g(\hat g^{-1}(N(c-1))) \ge N(c-1)$.
By induction, $g^{(c-1)}(N(c)) \ge N(1) = 2$, and so the smallest $x$ such that $g^{(x)}(N(c)) < 2$ must satisfy $x \ge c$.

Let $n_{i+1} = h(n_i)$, let $C_i^{\text{num}} = \min\{C \in \mathbb{N} \mid n_i < h^{(C)}(1)\}$ and let $C_i = \min\{C \in \mathbb{N} \mid g^{(C)}(n_i) < 2\}$.
Let
\[R_i = \frac{\min\{C \in \mathbb{N} \mid n_i < h^{(C)}(1)\}}
    {\min\{C \in \mathbb{N} \mid g^{(C)}(n_i) < 2\}}
    = \frac{C_i^\text{num}}{C_i}
\]
Each time $i$ increases, the numerator increases by 1, since $n_i < h^{(C)}(1)$ if and only if $h(n_i) < h(h^{(C)}(1)) = h^{C+1}(1)$ since $h$ is non-decreasing.  Hence $C_{i+l}^{\text{num}} = C_i^{\text{num}} + l$.

It remains to show that $C_{i+l} = C_i +l + o(l)$. %

First, we show that
\begin{equation}\label{eq:grow_by_l}
    C_{i+l} \ge C_i+l.
\end{equation}
Now
\begin{equation}
    g^{(C_{i})}(n_{i}) < 2 \le g^{(C_{i}-1)}(n_{i}) \le g^{(C_i)}(h(n_i))
\end{equation}
since $n \le g(h(n))$, and
\begin{equation}
    g^{(C_{i+1})}(h(n_{i})) = g^{(C_{i+1})}(n_{i+1}) < 2
\end{equation}
whence $g^{(C_{i+1})}(h(n_i)) < 2 \le g^{(C_i)}(h(n_i))$ and, since $g(n)<n$, $C_{i+1} > C_{i}$.
Since they are integers, $C_{i+1} \ge C_i + 1$, and~\eqref{eq:grow_by_l} follows.

It remains to show that $C_{i+l} \le C_i + l + o(l)$.
Since $g(n) = o(n)$, for all $l$ there exists an $n_0$ such that for all $n > n_0$,
\begin{equation}\label{eq:concave}
    g(n+l) < n.
\end{equation}
By the definition of $h$ and the assumption $g(n) - g(n-1)$ is non-increasing there exists an $ n_1$ such that for all $n > n_1$,
\begin{align}
    g(h(n)) & = g(\min\{x : g(x) \ge n\} + n\} & \text{by definition} \\
    & = g(\max\{x : g(x) < n\} + 1 + n\} & \text{as both sets are non-empty and $g$ increasing} \\
    & \le g(\max\{x : g(x) < n\}) + 1 & \text{since $\Delta(n)<1/\sqrt{n}$ } \\
    & < n+1.
\end{align}
This implies that for $n > n_1$,
$g^{(i)}(h^{(i)}(n)) < g^{(i-1)}(h^{(i-1)}(n)+1) \le g^{(i-1)}(h^{(i-1)}(n))+1$
and by induction, $g^{(l)}(h^{(l)}(n)) \le n+l$.
Combined with~\eqref{eq:concave}, this shows that, for sufficiently large $n$,
\begin{equation}
    g^{(l+1)}(h^{(l)}(n)) < n
\end{equation}
whence
\begin{equation}
    g^{(C)}(n) < 2 \Rightarrow g^{(C+l+2)}(h^{(C+l)}(n))<2.
\end{equation}
That implies that when $n_i$ is sufficiently large and $i$ increases by $l$  (and hence $C^{\text{num}}_i$ does too), $C_i$ increases by no more than $l+2$.

In conjunction with~\eqref{eq:grow_by_l}, this shows that for sufficiently large $i$, for all $l > 1$,
\begin{equation}
    \frac{C_i^{\text{num}} + l}{C_i + l + 2} \le R_i \le \frac{C_i^{\text{num}} + l}{C_i + l}
\end{equation}
The upper and lower bounds tend to~1 as $l\rightarrow\infty$, and the result follows from the sandwich principle.

\item
In this case, $g^{(C)}(n) = n^{\gamma^C}$, so
\begin{align*}
\min\{C \in \mathbb{N}\cup\{0\} \mid g^{(C)}(n) < 2\}
  & = \min\{C \in \mathbb{N}\cup\{0\} \mid \gamma^C \log n < \log(2) \} \\
  & = \min\{C \in \mathbb{N}\cup\{0\} \mid (1/\gamma)^C > \log_2 n) \} \\
  & = \min\{C \in \mathbb{N}\cup\{0\} \mid C > \log_{1/\gamma} (\log_2 n) \} \\
  & \le \lfloor \log_{1/\gamma} \log_2 n \rfloor + 1.
\end{align*}

\item
This follows from the fact that if $g(n) = a$, the resulting tree is a caterpillar with trees coming off the main ``spine" of the caterpillar having $n$ leaves. Within these trees the bound from $g$ becomes irrelevant, so the highest complexity among them is the fully balanced tree with complexity $\lceil \log_2(a) + 1\rceil$.
\end{enumerate}
\end{proof}

This matches the lower bounds for both complete binary trees, which are $\Omega(n\log n)$ and caterpillars (sticks with a single leaf on each internal node, $g(n) = 1$) which are $\Omega(n)$.
The suggests the intriguing possibility that Hein's algorithm is within a constant factor of instance-optimal with respect to the number of queries.
Testing that hypothesis would require investigating lower bounds for classes of unbalanced trees.

The class of $g$-beanstalks can be extended to allow a slow-growing fraction of the depths to violate the imbalance rule.
If $C_g(n)$ is the complexity bound for $g$-beanstalks, then allowing up to $O(C_g(n))$ layers to have balanced subtrees does not change the asymptotic growth of the complexity.

This raises the question of whether the $O(n k \log_k n)$ algorithm of~\cite{brodal_complexity_2001} may also be $o(n k\log_k n)$ on a narrower class of tree.
This would be surprising, since that algorithm uses less informative queries, which take three leaves and returns one of four tree representing their connectivity in $T^*$.
Indeed, the Theorem~\ref{thrm:Brodal_nlogn} below shows that it can improve by a factor of at most $k$, which is constant for the binary trees considered in this subsection.

The algorithm again inserts leaves  at a time while maintaining a tree $T$ on those leaves such that each edge in $T$ represents a path in $T^*$.
The algorithm achieves its efficiency by maintaining a balanced \emph{separator tree} on $T$.
A separator tree is a tree over the same set of nodes, though different edges, that maintains information about the subtrees of the tree.
Leaves of the original tree $T$ are also leaves of the separator tree.

\begin{theorem}\label{thrm:Brodal_nlogn}
     For every tree with $n$ leaves, there exists an order of presentation of the input such that the algorithm of~\cite{brodal_complexity_2001} has complexity $\Omega(n \log_k n)$.
\end{theorem}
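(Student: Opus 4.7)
The plan is to exhibit, for any target tree $T^*$ with $n$ leaves, an adversarial insertion order that forces Brodal's algorithm to make $\Omega(\log_k n)$ queries on each of $\Omega(n)$ insertions. The leverage comes from a structural property the paper already highlights: Brodal's algorithm maintains a balanced separator tree over the currently inserted partial tree $T_m$, and a triplet query has only $O(1)$ outcomes, so each query advances the search by at most one level of this separator hierarchy. The per-leaf cost is thus at least the depth at which the leaf's attachment point sits in the separator tree, and the adversary's job is to keep forcing that depth to be $\Omega(\log_k n)$.

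First I would establish the per-insertion lower bound cleanly. When inserting a new leaf $z$ into $T_m$, the algorithm must identify which edge of $T_m$ to attach $z$ to, and the separator tree induces a $k$-ary hierarchy on those edges with depth $\Theta(\log_k m)$ under the balancing invariants Brodal uses to prove the $O(nk\log_k n)$ upper bound. Since a triplet query carries $O(1)$ bits of information and, operationally, selects among the $O(k)$ children of the current separator-tree node, identifying an attachment point that sits at a leaf of the separator tree requires $\Omega(\log_k m)$ queries. I would argue this as a combinatorial/information-theoretic fact about any search process against such a hierarchy, independent of the particular flavor of triplet query.

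Next I would build the adversarial ordering. A natural choice is to order the leaves of $T^*$ by an iterated centroid decomposition: at each level of the recursion, contribute one leaf from each of the balanced halves before descending. I would argue that after $\Theta(n)$ insertions in this order the partial tree $T_m$ has a balanced induced topology, so its Brodal separator tree must have depth $\Theta(\log_k n)$, and that for a constant fraction of the still-uninserted leaves their attachment point in $T_m$ lies at a leaf of this separator tree. Choosing these leaves to be the next to insert forces $\Omega(\log_k n)$ queries each; summing over the $\Omega(n)$ such insertions gives the claimed $\Omega(n\log_k n)$ total.

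The main obstacle I expect is tracking how rebalancing of the separator tree after each insertion can move the attachment points of still-uninserted leaves toward the root of the separator tree, potentially lowering their insertion cost. The key technical step will be a drift lemma that bounds how much one insertion can shrink the separator-tree depth of any fixed edge of $T_m$ (using the amortized rebalancing guarantees Brodal relies on), together with choosing batches of adversarial leaves that are spread across many disjoint deep separator subtrees so that at most one batch member can be ``rescued'' to shallow depth per rebalancing step. Making this argument robust to the exact rebalancing policy of Brodal's algorithm is the delicate part; everything else is bookkeeping on top of the per-insertion lower bound and the centroid-decomposition construction.
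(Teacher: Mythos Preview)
Your high-level strategy matches the paper's---force $\Omega(\log_k m)$ queries on each of $\Omega(n)$ insertions by ensuring the attachment point lies deep in the separator tree---but your execution diverges substantially and is riskier than it needs to be.

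The paper's proof rests on a single clean observation you missed: leaves of $T_m$ are always leaves of the separator tree, so an edge incident to a leaf of $T_m$ is automatically at the bottom of the separator tree, \emph{regardless of any rebalancing}. The paper therefore just constructs an ordering in which every new leaf attaches directly above an already-inserted leaf (either splitting the pendant edge or becoming a sibling). It achieves this by a breadth-first traversal of $T^*$ from the edge root, inserting the shallowest not-yet-inserted leaf in each newly discovered subtree; because leaves are added shallowest-first, any new attachment can only occur at an existing pendant edge. This makes the whole ``drift'' worry disappear: the attachment point of the next adversarial leaf is at depth $\Omega(\log_k m)$ in the separator tree by construction, no amortized analysis of rebalancing required.

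Your centroid-decomposition ordering plus drift lemma is a genuinely different route, and it may well succeed, but the drift lemma is the entire content of the argument and you have not proved it---you correctly flag it as ``the delicate part''. Making it robust to Brodal's actual rebalancing policy would require engaging with the internals of that data structure in a way the paper's approach avoids entirely. So while your plan is not wrong, it trades a one-line structural invariant (leaves stay at the bottom of the separator tree) for a technical lemma whose truth you have not established; the paper's argument is both shorter and complete.
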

\begin{proof}
    The separator tree for iteration $i$ has depth $\Omega(\log_k i)$.
    The proof will show that there exists a sequence in which to insert leaves such that separator tree will be searched to depth $\Omega(\log_k i)$ at each stage $i$, which shows that the whole algorithm has complexity $\Omega (\sum_{i=1}^n \log_k (i)) = \Omega (n \log_k n)$.
    We say that a leaf $z$ is inserted \emph{directly above} an existing leaf $y$ if it is added either as a child of $y$ or it is added by inserting a new internal node on the edge incident to $y$ and attaching $z$ to that new node.

    To ensure the separator tree always has to be searched to depth $\Omega(\log_k i)$, leaves can be inserted in such a way that each insertion is directly above a leaf -- either a child of the leaf's parent, or breaking the edge connected to the leaf.
    Separator trees are guaranteed to have the leaves of the tree at the bottom, so the edges directly above the leaves will also be at the bottom  of the separator tree.
    
    The first pair of leaves input to the algorithm will be such that the path between them includes $uv$, the edge root of the tree $T^*$ being inferred.
    Leaves will be added in an order that ensures $uv$ remains the edge root of the current estimate $T$ at each stage.

    Each subtree $T_u$,  $T_v$ is traversed in breadth first order, with interleaving in such an order as to ensure that $uv$ remains the edge root. ?
    When an internal node $r$ is reached, for each child $C$ of $r$, if no leaf in the subtree rooted at $C$ has been added, add the least deep leaf in that subtree, with ties broken arbitrarily.  
    All leaves will eventually be added, since the traversal continues to the leaves.
    
    All insertions will be to the directly above a leaf for the following reason.
    It is not possible that a node joins as a child of an ancestor of a branch point, since the leaves are added shallowest first.
    Let the latest common ancestor of newly inserted leaf $x$ of $T_i$ and any other inserted leaf $y$ of $T_i$ by $z$.
    All internal nodes that are ancestors of $z$ have already been discovered by the breadth-first traversal
    and so it is not possible for a new branch point to be above an existing branch point.
\end{proof}
The difference in lower bound may be due to the fact that the algorithm of~\cite{hein_optimal_1989} uses distance queries, which become more informative as the size of the tree grows, whereas that of~\cite{brodal_complexity_2001} uses less informative queries that return one of three distinct values, regardless of the size of the tree.

\section{Conclusion}
This paper has shown that Hein's algorithm was the first to determine the topology of an unrooted tree with $O(n k \log_k n)$ queries to an oracle that can only return the sum of edge weights between pairs of leaves.
Moreover, it has shown that the same algorithm needs at most $o(n k \log k n)$ queries for sufficiently unbalanced trees; this is possibly the first such result beyond algorithms tailored for the caterpillar.

This raises the interesting question of whether Heim's algorithm is within $O(1)$ of instance optimal.
That would hold if there is a constant $\eta \ge 1$ such that for all sufficiently large trees, $T^*$, the number of queries used by Heim's algorithm is at most $\eta$ times the number of queries used by the best algorithm to identify which leaf of $T^*$ is which.

this paper has only considered query complexity.
Efficient implementation is an open question.
Although the number of queries required is small, the cost of searching within the partial tree for the next place to join is asymptotically more.  A simple algorithm would take $O(n D)$ where $D$ is the diameter of the graph, which is not $O(n \log n)$ in general.  It is likely that an implementation with $O(n \log n)$ computational complexity exists for fixed $k$ (details forthcoming), but there is no indication that one with $O(n C(n))$ computational complexity does.  If so, that would imply the interesting result that the optimal query complexity is less than the inherent computational complexity.  
Indeed, the topologies such as the beanstalks for which Heim's algorithm is $o(n k \log_k n)$ benefit are also $\omega(\log_k n)$ in depth, suggesting a possible trade-off.
However, it is possible that a more sophisticated algorithm could obtain complexity $O(n(\log n + C(n)))$.

\newcommand{\etalchar}[1]{$^{#1}$}


\begin{thebibliography}{CTAW11}

\bibitem[BFP{\"{O}}01]{brodal_complexity_2001}
Gerth~St{\o}lting Brodal, Rolf Fagerberg, Christian N.~S. Pedersen, and Anna
  {\"{O}}stlin.
\newblock The complexity of constructing evolutionary trees using experiments.
\newblock In Fernando Orejas, Paul~G. Spirakis, and Jan van Leeuwen, editors,
  {\em Automata, Languages and Programming, 28th International Colloquium,
  {ICALP} 2001, Crete, Greece, July 8-12, 2001, Proceedings}, volume 2076 of
  {\em Lecture Notes in Computer Science}, pages 140--151. Springer, 2001.

\bibitem[CTAW11]{choi_learning_nodate}
Myung~Jin Choi, Vincent~Y.F. Tan, Animashree Anandkumar, and Alan~S. Willsky.
\newblock Learning latent tree graphical models.
\newblock {\em Journal of Machine Learning Research}, 12(49):1771--1812, 2011.

\bibitem[FPA{\etalchar{+}}24]{fang2024three}
Luxin Fang, Abu~Bakr Pengwah, Lachlan L.~H. Andrew, Reza Razzaghi, and
  Mario~Andr{\'e}s Mu{\~n}oz.
\newblock Three-phase voltage sensitivity estimation and its application to
  topology identification in low-voltage distribution networks.
\newblock {\em International Journal of Electrical Power \& Energy Systems},
  158:109949, 2024.

\bibitem[FPRA23]{flynn2023improved}
Cassidy Flynn, Abu~Bakr Pengwah, Reza Razzaghi, and Lachlan L.~H. Andrew.
\newblock An improved algorithm for topology identification of distribution
  networks using smart meter data and its application for fault detection.
\newblock {\em IEEE Transactions on Smart Grid}, 14(5):3850--3861, 2023.

\bibitem[Hei89]{hein_optimal_1989}
Jotun~J. Hein.
\newblock An optimal algorithm to reconstruct trees from additive distance
  data.
\newblock {\em Bulletin of Mathematical Biology}, 51(5):597--603, January 1989.

\bibitem[KLW96]{kannan_determining_1996}
Sampath~K. Kannan, Eugene~L. Lawler, and Tandy~J. Warnow.
\newblock Determining the {Evolutionary} {Tree} {Using} {Experiments}.
\newblock {\em Journal of Algorithms}, 21(1):26--50, July 1996.

\bibitem[KZZ03]{king_complexity_nodate}
Valerie King, Li~Zhang, and Yunhong Zhou.
\newblock On the complexity of distance-based evolutionary tree reconstruction.
\newblock {\em SODA '03}, page 444–453, 2003.

\bibitem[PFRA21]{pengwah2021topology}
Abu~Bakr Pengwah, Luxin Fang, Reza Razzaghi, and Lachlan L.~H. Andrew.
\newblock Topology identification of radial distribution networks using smart
  meter data.
\newblock {\em IEEE Systems Journal}, 16(4):5708--5719, 2021.

\end{thebibliography}
\end{document}